\providecommand{\U}[1]{\protect\rule{.1in}{.1in}}
\theoremstyle{plain}
\theoremstyle{definition}
\newtheorem{defn}{Definition}
\theoremstyle{proposition}
\newtheorem{prop}{Proposition}
\theoremstyle{lemma}
\theoremstyle{corollary}
\begin{document}
\title{A note on the problem of proper time in Weyl space-time }

\begin{abstract}
We discuss the question of whether or not a general Weyl structure is a
suitable mathematical model of space-time. This is an issue that has been in
debate since Weyl formulated his unified field theory for the first time. We
do not present the discussion from the point of view of a particular
unification theory, but instead from a more general standpoint, in which the
viability of such a structure as a model of space-time is investigated. Our
starting point is the well known axiomatic approach to space-time given by
Elhers, Pirani and Schild (EPS). In this framework, we carry out an exhaustive
analysis of what is required for a consistent definition for proper time and
show that such a definition leads to the prediction of the so-called ``second
clock effect". We take the view that if, based on experience, we were to
reject space-time models predicting this effect, this could be incorporated as
the last axiom in the EPS approach. Finally, we provide a proof that, in this
case, we are led to a Weyl integrable space-time (WIST) as the most general
structure that would be suitable to model space-time.

\end{abstract}
\author{R. Avalos, F. Dahia, C. Romero}
\affiliation{Departamento de F\'{\i}sica, Universidade Federal da Para\'{\i}ba, Caixa
Postal 5008, 58059-970 Jo\~{a}o Pessoa, PB, Brazil.}
\affiliation{E-mail: rodrigo.avalos@fisica.ufpb.br; fdahia@fisica.ufpb.br; cromero@fisica.ufpb.br}
\maketitle

\section{Introduction}

The axiomatic approach to space-time proposed by Ehlers, Pirani and Schild in
\cite{EPS}, tries to build a suitable mathematical model of space-time from
basic assumptions about the behavior of freely falling particles and the
propagation of light rays. These are considered to be the \textit{primitive
concepts} and other constructions should be derived from these elements plus
some set of hypotheses about their behavior which should be as natural as
possible. This program led EPS to show that the propagation of light
determines a \textit{conformal structure} on space-time, while the motion of
freely falling particles determines a \textit{projective structure} on
space-time. They require these two structures to satisfy a
\textit{compatibility condition}, which then leads to a \textit{Weyl
structure} as the resulting mathematical model. Up to this point, they
conclude that the propagation of light and the motion of freely falling
particles determine a Weyl structure as a natural model of space-time. They
proceed to propose two (apparently different) possible axioms which then
reduce the former to the structure of a \textit{Weyl integrable}
space-time\footnote{Actually, after showing that Weyl's curvature $F$ has to
vanish, they conclude that space-time geometry has a Riemannian
representation. This is true, since the vanishing of $F$ implies (in a simply
connected domain) that the Weyl structure is integrable, and in any Weyl
integrable structure there is a Riemannian representative of the class.}. The
previous axioms express quite general ideas which deal with simple aspects of
the motion of freely falling particles and light rays, while the last axiom
seems to deal with much more delicate details regarding the behavior of
\textit{clocks}. In particular, in order to discuss the behavior of clocks in
this context, we should first have a well-defined and physically sensible
notion of \textit{proper time}. This notion seems to be at the core of the
different discussions, which took place after Weyl's proposal of his unified
theory, concerning the viability of Weyl's space-time as an acceptable model
for physics. Some of these discussions, which involved Weyl, Einstein,
Eddington and Pauli, among others, can be reviewed in \cite{Goenner}. For
instance, it is well-known that the original discussions were based upon the
fact that, in a Weyl structure, parallel transported vectors change their
``norms" according to the following expression:
\begin{equation}
g(V(t),V(t))=g(V_{0},V_{0})e^{\int_{t_{0}}^{t}\omega(\gamma^{\prime}(s))ds}%
\end{equation}
where $\gamma$ is the curve used to make the transport of the initial vector
$V_{0}$, and $\omega$ is Weyl's 1-form field. From the previous expression it
is clear that the norm of such parallel transported vector at a particular
point depends on the curve chosen. It was then argued that this effect was
related with a dependence of the ticking rate of a clock on its history,
concluding that a Weyl structure leads to a \textit{second clock effect}. In
our opinion, this discussion, even though physically well-guided, is not
rigorous, since the association of the tick-tack of a clock with the norm a
parallel transported vector is not trivial, and, even more, norms or lengths
are not well-defined concepts in the context of Weyl's space-time. Thus, we
believe that a detailed and rigorous proof of this intuitive statement is
needed, and for this, a plausible notion for proper time in the context of
Weyl's theory seems to be essential. Although in their paper EPS give a
mathematically well-defined notion of proper time, this notion, in the way
they introduce it, does not seem to be motivated by the motion of freely
falling particles or light rays \cite{EPS}. On the other hand, the existence
of a well-defined and physically sensible definition of proper time in a Weyl
structure has been carefully discussed by V. Perlick \cite{Perlick1}. At first
sight, Perlick's definition might not seem to be very much related with the
one given by EPS. In this paper, we will show that the concept of Perlick's
proper time may be motivated by the axiomatic approach given by EPS, and that
his definition is not only mathematically well-defined, but also physically
sensible. We then will show the equivalence of EPS's and Perlick's definition,
and that the latter leads to the second clock effect. Finally, we will show
that, if we were to rule out models of space-time exhibiting such an effect,
then we would arrive at what is called a ``Weyl integrable space-time"\textit{
}structure. In fact, this claim has been stated previously, using, however, a
different line of reasoning \cite{F-I}.

As related to the EPS paper, we would like to refer the reader to some
interesting results obtained in \cite{Perlick1},\cite{referee1}%
,\cite{referee2},\cite{referee3},\cite{AGS},\cite{AL} and \cite{Arg}, where in particular
\cite{Perlick1} and \cite{referee2} give clear insights on the properties of
Weyl's space-time, providing us with, not only a sensible definition for
proper time, but also suggesting methods to construct standard clocks
(\cite{Perlick1}) and to detect a non-zero \textit{length curvature} $F$
(\cite{referee2}). It is also interesting to stress that in \cite{AGS} and \cite{AL} the authors analyse how, by introducing some rudiments of quantum mechanics and considering the behaviour of \textit{matter waves} in the \textit{classical limit}, the EPS approach can be either modified or supplemented so as to give a Lorentzian structure as the resulting mathematical model of space-time.

\section{Weyl Structures}

Since in this paper we will accept, in view of \cite{EPS}, that the motion of
freely falling particles and light rays determine a Weyl structure on
space-time, we will shortly review the basic definitions of such a structure.

A Weyl manifold is a triple $(M,g,\omega)$, where $M$ is a differentiable
manifold, $g$ a semi-Riemannian metric on $M$, and $\omega$ a 1-form field on
$M$. We assume that $M$ is endowed with a torsion-free linear connection
$\nabla$ satisfying the following \textit{compatibility condition}:
\begin{align}\label{compatibility}
\nabla g=g\otimes\omega.
\end{align}
where, adopting the notation of \cite{Spivak} or \cite{Oneill}, $\nabla
g$ denotes the $(0,3)$-tensor field defined by $\nabla
g(Y,Z,X)\doteq(\nabla_{X}g)(Y,Z)$ , where  $X,Y$ and $Z$ are vector fields.

Results concerning the existence and uniqueness of such a connection are
straightforward and their proofs are analogous to those known for the
Riemannian case \cite{Folland}. It can easily be shown that, in local
coordinates, the components of the Weyl connection $\nabla$ are given by:
\[
\Gamma_{ac}^{u}=\frac{1}{2}g^{bu}(\partial_{a}g_{bc}+\partial_{c}%
g_{ab}-\partial_{b}g_{ca})+\frac{1}{2}g^{bu}(\omega_{b}g_{ca}-\omega_{a}%
g_{bc}-\omega_{c}g_{ab}).
\]

A useful expression concerning Weyl connections is the following
\begin{equation}
\label{compatibility2}X(g(Y,Z))=g(\nabla_{X}Y,Z)+g(Y,\nabla_{X}Z)+\omega
(X)g(Y,Z),
\end{equation}
which is just a restatement of (\ref{compatibility}).

Let us now note that the compatibility condition (\ref{compatibility}) is
invariant under the following group of transformations:%

\begin{equation}
\left\{
\begin{array}
[c]{ll}%
\overline{g}=e^{-f}g & \\
\overline{\omega}=\omega-df &
\end{array}
\right.  \label{weyl group}%
\end{equation}
where $f$ is an arbitrary smooth function defined on $M$. By this we mean that
if $\nabla$ is compatible with $(M,g,\omega),$ then it is also compatible with
$(M,\overline{g},\overline{\omega})$. It is easy to check that these
transformations define an equivalence relation between Weyl manifolds. In this
way, every member of the class is compatible with the same connection, hence
has the same geodesics, curvature tensor and any other property that depends
only on the connection. This is the reason why it is regarded more natural,
when dealing with Weyl manifolds, to consider the whole class of equivalence
$(M,[(g,\omega)])$ rather than working with a particular element of this
class. In this sense, it is argued that only geometrical quantities that are
invariant under (\ref{weyl group}) are of real significance in the case of
Weyl geometry. Following the same line of argument, it is also assumed that
only physical theories and physical quantities presenting this kind of
invariance should be considered of interest in this context. We will adopt
this point of view which, in fact, is not discussed in \cite{EPS}, where Weyl
transformations do not play an important role, while they are a fundamental ingredient in Weyl's 
original approach\footnote{See, for instance, the footnote in page 68 of
\cite{EPS}}. To conclude this section, we remark that when the 1-form field
$\omega$ is an exact form, then the Weyl structure is called
\textit{integrable}. In view of our previous discussion, in contrast to what
is done in \cite{EPS}, we will make a distintion between an integrable Weyl
structure and the Riemannian element in the class.

\section{Proper Time}

In this section, the main idea is to introduce a definition of \textit{proper
time} which would naturally fit the axiomatic approach proposed by Ehlers,
Pirani and Schild. In this work, we will disregard the last of the
EPS\footnote{In \cite{EPS}, in order to reduce the Weyl structure to a
Riemannian one, two possible additional axioms, regarding the behaviour of
clocks, are introduced. The first one is stated as follows. Given two freely
falling, infinitesimally proximate clocks $C_{1}$ and $C_{2}$, if we consider
a regular sequence of events $(p_{1},p_{2},\dots)$ in the world line of
$C_{1}$ determined by the ticking of this clock, and the Einstein-simultaneous
sequence of events $(q_{1},q_{2},\dots)$ in the world line of $C_{2}$, then
$(q_{1},q_{2},\dots)$ should also be a regular sequence of events. With the
help of the geodesic deviation equation, EPS show that this hypothesis reduces
the Weyl structure to a Riemannian one. Another way to achieve the same the
same goal, is to consider as an axiom that the ``norm" of parallel transported
vector fields at a point can not depend on the curve chosen to make the
transport, associating this ``norm", for the case of time-like curves, to the
ticking rate of clocks.} axioms, however retaining that the motion of freely
falling particles and light rays determine a Weyl structure as a suitable
mathematical model of space-time. In this way we will think of the space-time
geometry as given by a structure of the form $(M,[(g,\omega)])$. In this
scenario, an axiomatic definition for proper time, guided by the EPS
framework, should come as a natural way of distinguishing the proper time
parametrization among all the possible parametrizations of a time-like curve. A
natural way of doing this for the case of free falling particles (pregeodesics
of the Weyl structure), using the primitive concepts available in our
axiomatic framework, is to distinguish the parametrization which makes the
free-falling particle satisfy the geodesic equation, which will be clearly
defined uniquely up to an affine transformation. This idea has already been
suggested in \cite{Salim-Poulis}. A particle $\gamma:I\mapsto M$, $u\mapsto\gamma(u)$,
is freely falling if it is a pregeodesic, that is, if
\begin{align}\label{pt1}
\frac{D\gamma^{\prime}(u)}{du}=f(\gamma(u))\gamma^{\prime}(u).
\end{align}
Our principle is that proper time parametrization is a reparametrization of
$\gamma$ that transforms it into a geodesic. Then, if $\gamma(\tau)$ is a
time-like curve representing a freely falling particle, it is straightforward
to see that $\gamma$ is parametrized by proper time if, and only if
\begin{align}\label{pt2}
g(\gamma^{\prime}(\tau),\frac{D\gamma^{\prime}(\tau)}{d\tau})=0.
\end{align}

This notion of proper time fits naturally into the picture proposed by EPS,
since we are using the basic concepts of their construction to motivate it.
Our next step should be to generalize this definition for arbitrary time-like
curves, which represent arbitrary particles. Obviously, since, in general,
particles are not freely falling, (\ref{pt1}) does not generalize naturally to
the general case. However, (\ref{pt2}) does, and, as we have just seen,
(\ref{pt1}) and (\ref{pt2}) are equivalent in the case of freely falling
particles. This leads us to the following definition, which is precisely the
one given by V. Perlick in \cite{Perlick1}:

\begin{defn}
A time-like curve $\gamma:I\mapsto M$, $u\mapsto\gamma(u)$, is called a
\textit{standard clock} if $\frac{D\gamma^{\prime}}{du}$ is orthogonal to
$\gamma^{\prime}(u)$.
\end{defn}

In order to check that this definition is mathematically consistent with the
fact that we are working with a Weyl structure, we should check that it is
independent of the representative member of the class chosen to carry out the
computations. Suppose that for some particular $g\in\lbrack g]$ we have
$g(\gamma^{\prime},\frac{D\gamma^{\prime}}{du})=0$. Then, since $\nabla$ is
independent of the choice of the representative, so is the covariant
derivative, and hence $\frac{D\gamma^{\prime}}{du}$ does not depend on this
choice. Also, since any other $\tilde{g}\in\lbrack g]$ is related to $g$ by a
conformal transformation, then orthogonality of vectors is preserved. Thus,
the definition is consistent in the whole class. Also, in order for this
definition to be sensible, we should show that any time-like curve can be
parametrized by this kind of parametrization. To see this, consider a
time-like curve
\begin{align*}
\gamma:I  &  \mapsto M\\
t  &  \mapsto\gamma(t)
\end{align*}
We are looking for a reparametrization of $\gamma$ such that the
reparametrized curve is a standard clock. This means that we are looking for a
diffeomorphism
\begin{align*}
\mu:I  &  \mapsto I^{\prime}\\
t  &  \mapsto\tau
\end{align*}
such that
\begin{align*}
\tilde{\gamma}=\gamma\circ\mu^{-1}:I^{\prime}  &  \mapsto M\\
\tau &  \mapsto\gamma(\mu^{-1}(\tau))
\end{align*}
is a standard clock. With this set-up, we see that $\gamma=\tilde{\gamma}%
\circ\mu:I\mapsto M$ and
\begin{align}
\label{pt3}\frac{D\gamma^{\prime}(t)}{dt}=\frac{d^{2}\mu}{dt^{2}}\tilde
{\gamma}^{\prime}(\mu(t))+(\frac{d\mu}{dt})^{2}\frac{D\tilde{\gamma}^{\prime}%
}{d\tau}(\mu(t)).
\end{align}
Also from (\ref{pt3}) we get
\[%
\begin{split}
g(\frac{D\gamma^{\prime}(t)}{dt},\gamma^{\prime}(t))  &  =\frac{d^{2}\mu
}{dt^{2}}g(\tilde{\gamma}^{\prime}(\mu(t)),\gamma^{\prime}(t))+(\frac{d\mu
}{dt})^{2}g(\frac{D\tilde{\gamma}^{\prime}}{d\tau}(\mu(t)),\gamma^{\prime
}(t))\\
&  =\frac{1}{\frac{d\mu}{dt}}\frac{d^{2}\mu}{dt^{2}}g(\gamma^{\prime
}(t),\gamma^{\prime}(t))+(\frac{d\mu}{dt})^{3}g(\frac{D\tilde{\gamma}^{\prime
}}{d\tau}(\mu(t)),\tilde{\gamma}^{\prime}(\mu(t)))
\end{split}
\]
Then, the following equation is satisfied:
\[
\frac{d^{2}\mu}{dt^{2}}-\frac{g(\gamma^{\prime}(t),\frac{D\gamma^{\prime}%
(t)}{dt})}{g(\gamma^{\prime}(t),\gamma^{\prime}(t))}\frac{d\mu}{dt}%
+(\frac{d\mu}{dt})^{4}\frac{g(\frac{D\tilde{\gamma}^{\prime}}{d\tau}%
(\mu(t)),\tilde{\gamma}^{\prime}(\mu(t)))}{g(\gamma^{\prime}(t),\gamma
^{\prime}(t))}=0.
\]
From this last equation we see that $\tilde{\gamma}$ is a standard clock if,
and only if, the reparametrization $\mu$ satisfies the following differential
equation:
\begin{align}
\label{pt4}\frac{d^{2}\mu}{dt^{2}}-\frac{g(\gamma^{\prime}(t),\frac
{D\gamma^{\prime}(t)}{dt})}{g(\gamma^{\prime}(t),\gamma^{\prime}(t))}%
\frac{d\mu}{dt}=0
\end{align}

We then see that if a reparametrization $\mu^{-1}$ makes $\gamma\circ\mu^{-1}$
a standard clock, then it satisfies (\ref{pt4}). Conversely, given a solution
of (\ref{pt4}) with initial conditions such that $\frac{d\mu}{dt}(t_{0})\neq
0$, then there is a neighborhood $I$ of $t_{0}$ where $\mu:I\mapsto
\mu(I)=I^{\prime}$ is a diffeomorphism, and hence the reparametrization
$\gamma\circ\mu^{-1}:I^{\prime}\mapsto M$ will be a standard clock. Thus, we
can state that given a time-like curve $\gamma(t)$, there is a
reparametrization which makes it a standard clock if, and only if, the
equation (\ref{pt4}) admits a solution with $\frac{d\mu}{dt}(t_{0})\neq0$.
Since this type of differential equation always admits solutions for given
initial data $\mu(t_{0}),\mu^{\prime}(t_{0})$, then a given time-like curve
$\gamma$ can always be reparametrized in a neighborhood of any point, so as to
make it a standard clock. Seeing that this definition makes mathematical
sense, we will use it to define \textit{proper time}.

\begin{defn}
We will say that a time-like curve $\gamma$ is parametrized by proper time if
the parametrized curve is a standard clock.
\end{defn}

Concerning the above definitions, the following remarks should be
made:\newline1) Using the same notations as above, the map $\mu:I\mapsto
I^{\prime}$, maps an arbitrary parametrization $t$ to proper time $\tau
=\mu(t)$.\newline2) Since (\ref{pt4}) is linear, if $\mu$ is a solution then
$\tilde{\mu}(t)=a\mu(t)+b$ is also a solution, where $a,b$ are arbitrary
constants. These constants are determined by the initial conditions, and they
just represent the scale $(a)$ and the zero of the clock $(b)$.

\bigskip It will be important to note that a general solution of equation
(\ref{pt4}) can be obtained. In order to do this, first note that from
(\ref{compatibility2}) we have
\[
\frac{d}{dt}\ln(-g(\gamma^{\prime}(t)),\gamma^{\prime}(t))=\frac{1}%
{g(\gamma^{\prime}(t),\gamma^{\prime}(t))}\big\{2g(\gamma^{\prime}%
(t),\frac{D\gamma^{\prime}(t)}{dt})+\omega(\gamma^{\prime}(t))g(\gamma
^{\prime}(t),\gamma^{\prime}(t))\big\}.
\]
This leads to
\begin{align}
\label{pt5}\frac{g(\gamma^{\prime}(t),\frac{D\gamma^{\prime}(t)}{dt}%
)}{g(\gamma^{\prime}(t),\gamma^{\prime}(t))}=\frac{1}{2}\big\{\frac{d}{dt}%
\ln(-g(\gamma^{\prime}(t),\gamma^{\prime}(t)))-\omega(\gamma^{\prime
}(t))\big\}.
\end{align}
Going back to (\ref{pt4}) and using (\ref{pt5}) we get
\begin{align}
\label{pt6}\frac{d^{2}\mu}{dt^{2}}-\frac{1}{2}\big\{\frac{d}{dt}\ln
(-g(\gamma^{\prime}(t),\gamma^{\prime}(t)))-\omega(\gamma^{\prime
}(t))\big\}\frac{d\mu}{dt}=0.
\end{align}
In order to integrate this equation, first define $\psi\doteq\frac{d\mu}{dt}$,
and then (\ref{pt6}) is reduced to a first order linear ordinary differential
equation for $\psi$, which can easily be integrated, yielding
\begin{align}
\label{ptdiffeq1}\frac{d\mu(t)}{dt}=\frac{d\mu(t_{0})}{dt}\Big[\frac
{g(\gamma^{\prime}(t),\gamma^{\prime}(t))}{g(\gamma^{\prime}(t_{0}%
),\gamma^{\prime}(t_{0}))}\Big]^{\frac{1}{2}}e^{-\frac{1}{2}\int_{t_{0}}%
^{t}\omega(\gamma^{\prime}(u))du}.
\end{align}
We can now integrate this equation once more and get the general solution for
(\ref{pt4}). It is not difficult to see that doing this we get the following:
\[
\mu(t)=\frac{\frac{d\mu(t_{0})}{dt}}{(-g(\gamma^{\prime}(t_{0}),\gamma
^{\prime}(t_{0})))^{\frac{1}{2}}}\int_{t_{0}}^{t}e^{-\frac{1}{2}\int_{t_{0}%
}^{u}\omega(\gamma^{\prime}(s))ds}(-g(\gamma^{\prime}(u),\gamma^{\prime}(u)))
^{\frac{1}{2}}du+\mu_{0},
\]
If $\Delta\tau(t)$ denotes the elapsed proper time between $t_{0}$ and $t,$ we
can write
\begin{equation}
\Delta\tau(t)=\frac{\frac{d\tau(t_{0})}{dt}}{(-g(\gamma^{\prime}(t_{0}%
),\gamma^{\prime}(t_{0})))^{\frac{1}{2}}}\int_{t_{0}}^{t}e^{-\frac{1}{2}%
\int_{t_{0}}^{u}\omega(\gamma^{\prime}(s))ds}(-g(\gamma^{\prime}%
(u),\gamma^{\prime}(u)))^{\frac{1}{2}}du. \label{explicitpt}%
\end{equation}

As a final comment about the mathematical consistency of our present
definition of proper time, we remark that the expression (\ref{explicitpt}),
which is the expression to be used when computing the elapsed proper time
measured by an observer between two events, is invariant both under the Weyl
transformations (\ref{weyl group})\ and reparametrizations of $\gamma$.
Therefore, we can say that we have a mathematically consistent definition of
proper time in the framework of Weyl geometry. Let us now analyze whether it
is physically sensible to adopt this definition. In order to address this
question we will consider three points.

\subsection*{i) The Riemannian limit}

We claim that the above definition of proper time coincides with the usual
definition adopted in general relativity and other metric theories of gravity
where the underlying space-time structure is that of a Riemannian manifold. In
order to see this, consider this definition when the manifold $(M,g)$ is
Riemannian. Let $\gamma$ be a time-like curve. In this case, the compatibility
of $\nabla$ with $g$ gives
\begin{equation}
g(\frac{D\gamma^{\prime}}{dt}(t),\gamma^{\prime}(t))=\frac{1}{2}\frac{d}%
{dt}g(\gamma^{\prime}(t),\gamma^{\prime}(t)).
\end{equation}
So $\gamma(t)$ is standard clock if and only if $g(\gamma^{\prime}%
(t),\gamma^{\prime}(t))$ is constant along $\gamma(t)$. This means that
$\gamma$ is parametrized by arc-length, or an affine reparametrization of it.
We conclude that our definition is consistent with the Riemannian limit.

\subsection*{ii)\ The WIST limit}

If the 1-form field $\omega$ is exact, i.e, if $\omega=d\phi$, where $\phi$ is
some smooth scalar field $\phi$ on $M$, then we say that the Weyl structure is
\textit{integrable,} and accordingly the resulting space-time is called
\textit{Weyl Integrable Space-Time (WIST)}. This kind of geometry has recently
attracted the attention of some cosmologists (see, for instance, \cite{Salim-Poulis},\cite{CBPF},\cite{Scholz1},\cite{Romero1},\cite{Romero2}). A particularly interesting recent review on this topic can be found in \cite{Scholz2}. In the case of WIST, it is already known that it is possible
to define the proper time interval between two events along a curve
$\gamma(t)$ in an invariant way as \cite{Romero2}
\begin{equation}
\Delta\tau=\int_{u_{1}}^{u_{2}}e^{-\frac{1}{2}\phi(\gamma(u))}\sqrt
{-g(\gamma^{\prime}(u),\gamma^{\prime}(u))}du.
\end{equation}
It is easy to see that this quantity is invariant under the group of Weyl
transformations defined in (\ref{weyl group}). We want to show that Perlick's
definition reduces to this expression in the case of a WIST model. In order to
do this, consider the explicit expression (\ref{explicitpt}) and set
$\omega=d\phi$. This will lead to%

\[%
\begin{split}
\Delta\tau &  =\frac{\frac{d\tau(u_{0})}{du}}{(-g(\gamma^{\prime}%
(u_{0}),\gamma^{\prime}(u_{0}))e^{-\phi(\gamma(u_{1}))})^{\frac{1}{2}}}%
\int_{u_{1}}^{u_{2}}e^{-\frac{1}{2}\phi(\gamma(u^{\prime}))}(-g(\gamma
^{\prime}(u^{\prime}),\gamma^{\prime}(u^{\prime}))^{\frac{1}{2}}du^{\prime}\\
&  =C\int_{u_{1}}^{u_{2}}e^{-\frac{1}{2}\phi(\gamma(u^{\prime}))}%
(-g(\gamma^{\prime}(u^{\prime}),\gamma^{\prime}(u^{\prime}))^{\frac{1}
{2}}du^{\prime}%
\end{split}
\]
which, by an appropriate setting of the scale to make $C=1,$ reproduces the
WIST definition of proper time.

\subsection*{iii) Additivity}

It seems that in any plausible physical definition proper time intervals
should be additive. This means that if an observer experiences three events
$A,B$ and $C$ in that order, then given two identical clocks, the time
interval measured by a single clock from $A$ to $C$ should be the same as the
sum of the time intervals measured by the other from $A$ to $B$, and from $B$
to $C$. Thus, for the above definition of proper time to be acceptable, it
must be the case that if we use (\ref{explicitpt}) to make computations in
both situations, the results should be the same.

\begin{figure}[th]
\centering
\includegraphics[width=40mm]{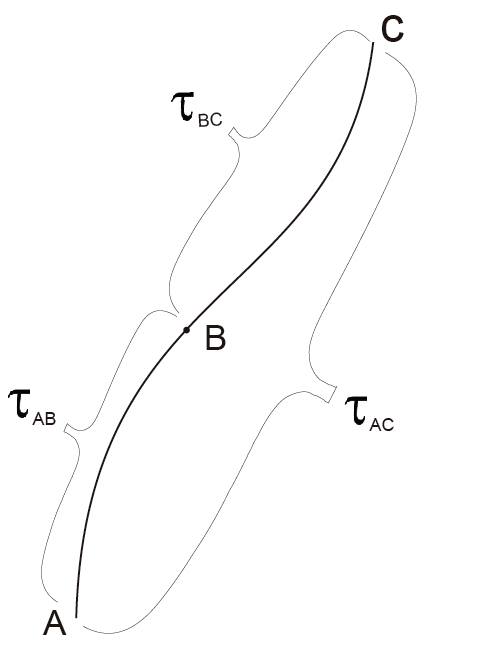}\caption{Additivity of proper
time.}%
\label{overflow}%
\end{figure}

In order to check that the definition that we have presented for proper time
is additive, we need to compute the proper time elapsed between $A$ and $B$,
and between $B$ and $C$, separately by using (\ref{explicitpt}), which would
represent the readings of two standard clocks used to compute these elapsed
times. Then, we have to add these results, and this should equal the result we
would get if we were to make a single computation of  the proper time elapsed
between $A$ and $C$, which represents the reading of a single clock. Carrying
out these calculations we get%

\[%
\begin{split}
\tau_{AB}  &  =\frac{\tau^{\prime}(t_{A})}{(-g(\gamma^{\prime}(t_{A}%
),\gamma^{\prime}(t_{A}))^{\frac{1}{2}}}\int_{t_{A}}^{t_{B}}e^{-\frac{1}%
{2}\int_{t_{A}}^{t}\omega(\gamma^{\prime}(s))ds}(-g(\gamma^{\prime}%
(t),\gamma^{\prime}(t)))^{\frac{1}{2}}dt\\
\tau_{BC}  &  =\frac{\tau^{\prime}(t_{B})}{(-g(\gamma^{\prime}(t_{B}%
),\gamma^{\prime}(t_{B}))^{\frac{1}{2}}}\int_{t_{B}}^{t_{C}}e^{-\frac{1}%
{2}\int_{t_{B}}^{t}\omega(\gamma^{\prime}(s))ds}(-g(\gamma^{\prime}%
(t),\gamma^{\prime}(t)))^{\frac{1}{2}}dt.\\
&
\end{split}
\]
On the other hand, from (\ref{ptdiffeq1}) we know that
\[
\tau^{\prime}(t_{B})=\tau^{\prime}(t_{A})\Big[\frac{g(\gamma^{\prime}%
(t_{B}),\gamma^{\prime}(t_{B}))}{g(\gamma^{\prime}(t_{A}),\gamma^{\prime
}(t_{A}))}\Big]^{\frac{1}{2}}e^{-\frac{1}{2}\int_{t_{A}}^{t_{B}}\omega
(\gamma^{\prime}(s))ds}.
\]
After substituting the above equation into the expression for $\tau_{BC}$ we
have that
\[%
\begin{split}
\tau_{AB}+\tau_{BC}  &  =\frac{\tau^{\prime}(t_{A})}{(-g(\gamma^{\prime}%
(t_{A}),\gamma^{\prime}(t_{A}))^{\frac{1}{2}}}\int_{t_{A}}^{t_{C}}e^{-\frac
{1}{2}\int_{t_{A}}^{t}\omega(\gamma^{\prime}(s))ds}(-g(\gamma^{\prime
}(t),\gamma^{\prime}(t)))^{\frac{1}{2}}dt\\
&  =\tau_{AC}%
\end{split}
\]
which proves the additivity of proper time intervals.

From the above arguments we are led to conclude that Perlick's definition of
proper time is both mathematically consistent and physically sensible. In the
next section, we will show that this definition is, in fact, equivalent to the
definition proposed by Ehlers, Pirani and Schild in their original paper
\cite{EPS}.

\section{Equivalence of Perlick's and EPS proper time.}

Let us begin by recalling the definition of proper time given by EPS
\cite{EPS}: We say that a time-like curve $\gamma$ is parametrized by proper
time if the tangent vector $\gamma^{\prime}$ is \textit{congruent} at each
point of the curve to a non-null vector $V$ which is parallel-transported
along $\gamma$. In the previous sentence, \textit{congruence} of vectors at a
point means that both vectors have the same \textit{norm}, \textit{i.e},
$g_{p}(\gamma_{p}^{\prime},\gamma_{p}^{\prime})=g_{p}(V_{p},V_{p})$. We will
now establish the equivalence between both definitions.

\begin{prop}
A time-like curve is parametrized by proper time according to EPS if it is
parametrized by proper time acordding to definition 2.
\end{prop}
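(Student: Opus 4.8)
The plan is to translate both definitions into the same analytic condition and show they coincide. I start by unpacking the EPS definition: we need a non-null vector field $V$ along $\gamma$ that is parallel-transported, i.e.\ $\frac{DV}{dt}=0$, and whose norm matches that of the tangent at each point, $g(\gamma',\gamma')=g(V,V)$ along the curve. The idea is that standard-clock parametrization (Definition 2) is exactly the condition that makes such a congruent parallel field exist. So first I would assume $\gamma$ is a standard clock, meaning $g(\gamma',\frac{D\gamma'}{dt})=0$, and construct the required $V$ explicitly.

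\medskip

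The key computation is to understand how the norm $g(\gamma',\gamma')$ evolves along a standard clock. Using \textit{Proposition} \ref{Weylcomp} with $X=Y=Z=\gamma'$, one has
\begin{equation*}
\frac{d}{dt}g(\gamma',\gamma')=2g\Big(\frac{D\gamma'}{dt},\gamma'\Big)+\omega(\gamma')g(\gamma',\gamma').
\end{equation*}
For a standard clock the first term vanishes, so $\frac{d}{dt}g(\gamma',\gamma')=\omega(\gamma')g(\gamma',\gamma')$, which integrates to
\begin{equation*}
g(\gamma'(t),\gamma'(t))=g(\gamma'(t_0),\gamma'(t_0))\,e^{\int_{t_0}^{t}\omega(\gamma'(s))\,ds}.
\end{equation*}
This tells me precisely how the tangent's norm grows, and it is this growth that the parallel field $V$ must reproduce. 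Next I would take any non-null $V_0$ at $\gamma(t_0)$ and parallel-transport it to obtain $V$ with $\frac{DV}{dt}=0$; applying the same proposition to $X=\gamma'$, $Y=Z=V$ gives $\frac{d}{dt}g(V,V)=\omega(\gamma')g(V,V)$, so $g(V,V)$ obeys the identical first-order linear ODE with the identical integrating factor $e^{\int_{t_0}^{t}\omega(\gamma'(s))\,ds}$.

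\medskip

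The two norms therefore satisfy the same ODE, so if I fix the initial condition $g(V_0,V_0)=g(\gamma'(t_0),\gamma'(t_0))$ at the single point $t_0$, uniqueness of solutions to the linear ODE forces $g(V(t),V(t))=g(\gamma'(t),\gamma'(t))$ for all $t$. Hence $V$ is congruent to $\gamma'$ everywhere and is parallel, which is exactly the EPS condition. This establishes that a standard clock satisfies the EPS definition, which is the direction asserted in the statement (EPS proper time holds \emph{if} Definition~2 holds). I expect the main obstacle to be purely a bookkeeping one: keeping the sign conventions consistent for time-like vectors (the paper writes norms as $-g(\gamma',\gamma')$ in several places) and ensuring $V$ remains non-null throughout, which follows because the exponential integrating factor never vanishes, so $g(V,V)$ cannot cross zero. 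I would close by remarking that the construction is reversible, so the two definitions are in fact equivalent, not merely one-directional.
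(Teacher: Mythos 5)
Your proof is correct and follows essentially the same route as the paper's own argument for this direction: construct $V$ by parallel transport with initial value matching $\gamma'(t_0)$, show via the Weyl compatibility condition that $g(V,V)$ and $g(\gamma',\gamma')$ satisfy the same first-order linear ODE, and conclude congruence by uniqueness of solutions. The only cosmetic differences are that you integrate the ODE explicitly (the paper just invokes uniqueness) and that the paper also writes out the reverse implication (EPS $\Rightarrow$ Definition 2) in full, which you only sketch as a reversibility remark.
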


\begin{proof}
Suppose that we have a time-like curve $\gamma$ parametrized by proper time according to EPS and let $\tau$ denote such parametrization. Then, by definition, there exists a parallel vector field $V$ along $\gamma$ satisfying the following condition:
\begin{align*}
g(\gamma'(\tau),\gamma'(\tau))=g(V(\tau),V(\tau)).
\end{align*}
Thus, differentiating the previous identity, using Weyl's compatibility condition and the fact that $V$ is a parallel vector field, we get the following.
\begin{align*}
\begin{split}
\omega(\gamma'(\tau))g(V(\tau),V(\tau))&=2g(\gamma'(\tau),\frac{D\gamma'(\tau)}{d\tau})+\omega(\gamma'(\tau))g(\gamma'(\tau),\gamma'(\tau))\\
&=2g(\gamma'(\tau),\frac{D\gamma'(\tau)}{d\tau})+\omega(\gamma'(\tau))g(V(\tau),V(\tau))\Rightarrow\\
0&=g(\gamma'(\tau),\frac{D\gamma'(\tau)}{d\tau})
\end{split}
\end{align*}
Hence $\gamma$ is parametrized by proper time according to definition 2.
In order to prove the converse, suppose that $\gamma$ is parametrized by proper time according to definition 2. Then, by hypothesis, $\gamma'(\tau)$ and $\frac{D\gamma'(\tau)}{d\tau}$ are orthogonal. Now consider the following initial value problem:
\begin{align*}
\frac{DV(\tau)}{d\tau}&=0,\\
V(\tau_0)&=\gamma'(\tau_0),
\end{align*}
which defines a unique parallel vector field along $\gamma$. Since $V$ is a parallel vector, using Weyl's compatibility condition, we see that $g(V(\tau),V(\tau))$ satisfies the following equation
\begin{align*}
\frac{d}{d\tau}g(V(\tau),V(\tau))=\omega(\gamma'(\tau))g(V(\tau),V(\tau)).
\end{align*}
Using the compatibility condition and the fact that $\gamma'(\tau)$ and $\frac{D\gamma'(\tau)}{d\tau}$ are orthogonal, we see that $g(\gamma'(\tau),\gamma'(\tau))$ also satisfies the previous equation. Furthermore, both solutions initially agree, that is, $g(V(\tau_0),V(\tau_0))=g(\gamma'(\tau_0),\gamma'(\tau_0))$. Then, by uniqueness of solutions, we get that $g(V(\tau),V(\tau))=g(\gamma'(\tau),\gamma'(\tau))$, which means that $\gamma$ is parametrized by proper time according to EPS.
\end{proof}

With the previous result we have established the equivalence of both definitions. This is an interesting fact, since, at first sight, the two definitions do not seem to be intimately related to each other. It is also worth noting that these two definitions have been widely used in the literature (see, for instance, \cite{EPS},\cite{Perlick1},\cite{F-I},\cite{Arg},\cite{Salim-Poulis}). However, it seems to us, that when an author accepts one of them, no reference to the other is made. This, in principle, could present an ambiguity if both definitions were not equivalent, since we would have different ways of defining proper time in this context. Thus, the establishing of the equivalence between the two definitions takes away any possible ambiguity.

\section{Analysis of the Second Clock Effect}

In this section, we investigate the question of whether or not a space-time
modelled as a Weyl structure, in which proper time is understood according to
Perlick's definition, exhibits the so-called \textit{second clock effect}. As
is well known, we say that a space-time model exhibits the second clock effect
if the clock rate of clocks depends on their histories \cite{Penrose}. The
analysis of whether a Weyl structure presents such an effect has been usually
done in a more or less \textit{intuitive} way, using the fact that the ``norms"
of parallel transported vectors depend on the path along which the transport
is taken, and accepting that such a parallel transported vector along a
time-like curve represents the clock rate of a clock. Thus, in this case, the
spectral lines emitted by an atomic clock could depend on its history, which
is something that has not been observed. Even though this argument might seem compelling, it is not a rigorous proof, and depends on some untested hypotheses, such as the fact that the tick
tack of a clock can be related with the norm of a parallel transported vector
field along the worldline of the clock, let alone the fact that such a
discussion should be made using concepts which are well-defined within the
context of Weyl's geometry. In this section, we intend to make a detailed
analysis of the second clock effect, and settle the question of whether or not
a general Weyl structure leads to the second-clock effect.

In order to give an answer to this question consider the following situation.
Suppose that we transport two identical standard clocks along a time-like
curve from $A$ to $B$, and then, at $B$, they separate, following different
paths $\gamma_{1}$ and $\gamma_{2}$ until they merge again at event $C$ (see
Fig. \ref{secondclock}), after which they continue their journey together
along the same path.\ Suppose that both clocks were synchronized at $A$.
Thinking of a clock as a device that counts the number of cycles of some
periodic process, what we are saying when we refer to synchronization is that
identical clocks use the same type of process and that the periods of these
cycles were set to be equal at $A$ (both clocks are set with the same scale at
$A$). Now assume that our space-time model does not exhibit a second clock
effect. Accepting this hypothesis means that we would expect that the clock
rate of a clock, at a given event in space-time, should depend only on
\textit{local properties} of the clock, that is, its position, instant
velocity, instant acceleration, etc, but not on its history. Therefore, in the
particular case we are considering, after we bring back the two identical
clocks together at $C,$ and keep them together, we would expect their clock
rates to coincide. In other words, we would expect the number of cycles
counted by either clock after $C$ to be the same. This also means that the
readings of the two clocks would coincide at any subsequent event $D$
($\tau_{CD}=\overline{\tau}_{CD}$). These considerations give us a way to test
a possible existence of the second clock effect: compute the elapsed time for
both clocks between $C$ and some subsequent event $D$ and see whether they
agree or not. If they do not, then clearly there is a second clock effect.

\begin{figure}[th]
\centering
\includegraphics[width=40mm]{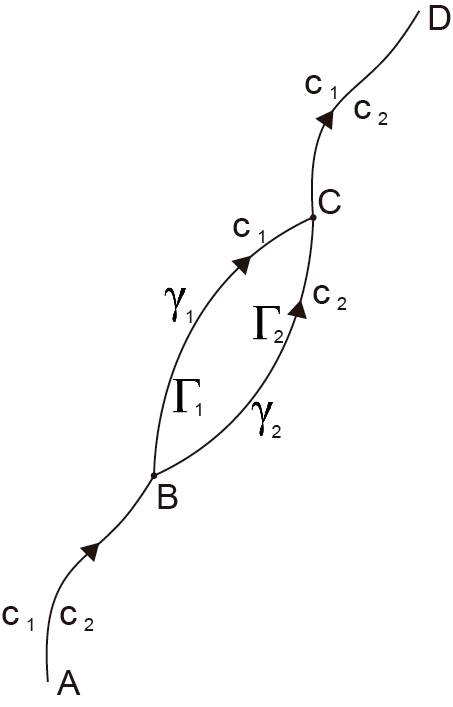}\caption{Transport of two clocks
c1 and c2. The curves $\gamma_{1}$ and $\gamma_{2}$ represent the world lines
of c1 and c2 respectively, while $\Gamma_{1}$ and $\Gamma_{2}$ denote the
portions of these curves between $B$ and $C$. }%
\label{secondclock}%
\end{figure}

Let us now, with the help of (\ref{explicitpt}), carry out some calculations
to make the above discussion more precise. First, let us recall that after the
event $C$ both clocks are being transported along the same time-like curve
$\gamma$. Suppose that the curve $\gamma$ is parametrized by some arbitrary
parameter $u$. Then, from (\ref{explicitpt}) we have%

\[%
\begin{split}
\tau &  =\frac{\tau^{\prime}(u_{C})}{(-g(\gamma^{\prime}(u_{C}),\gamma
^{\prime}(u_{C})))^{\frac{1}{2}}}\int_{u_{C}}^{u}e^{-\frac{1}{2}\int_{u_{c}%
}^{u^{\prime}}\omega(\gamma^{\prime}(s))ds}(-g(\gamma^{\prime}(u^{\prime
}),\gamma^{\prime}(u^{\prime})))^{\frac{1}{2}}du^{\prime}\\
\overline{\tau}  &  =\frac{\overline{\tau}^{\prime}(u_{C})}{(-g(\gamma
^{\prime}(u_{C}),\gamma^{\prime}(u_{C})))^{\frac{1}{2}}}\int_{u_{C}}%
^{u}e^{-\frac{1}{2}\int_{u_{c}}^{u^{\prime}}\omega(\gamma^{\prime}%
(s))ds}(-g(\gamma^{\prime}(u^{\prime}),\gamma^{\prime}(u^{\prime})))^{\frac
{1}{2}}du^{\prime}%
\end{split}
\]
where $\tau$ and $\overline{\tau}$ represent the readings of clocks 1 and 2
respectively. We can also compute $\tau_{AC}$ and $\overline{\tau}_{AC}$ using
(\ref{explicitpt}). In order to make this computation, we will consider that
the parametrization $u$ used for $\gamma$ after $C$ is the parametrization
used to parametrize the whole time-like path of clock 1. On the other hand,
for the path of clock 2 we will use $\overline{u}$ as a parameter. Then, a
straightforward calculation shows that
\[%
\begin{split}
\tau^{\prime}(u_{C})  &  =\tau^{\prime}(u_{A})\Big(\frac{g(\gamma_{1}^{\prime
}(u_{C}),\gamma_{1}^{\prime}(u_{C}))}{g(\gamma_{1}^{\prime}(u_{A}),\gamma
_{1}^{\prime}(u_{A}))}\Big)^{\frac{1}{2}}e^{-\frac{1}{2}\int_{u_{A}}^{u_{C}%
}\omega(\gamma_{1}^{\prime}(s))ds}\\
\overline{\tau}^{\prime}(\overline{u}_{C})  &  =\overline{\tau}^{\prime
}(\overline{u}_{A})\Big(\frac{g(\gamma_{2}^{\prime}(\overline{u}_{C}%
),\gamma_{2}^{\prime}(\overline{u}_{C}))}{g(\gamma_{2}^{\prime}(\overline
{u}_{A}),\gamma_{2}^{\prime}(\overline{u}_{A}))}\Big)^{\frac{1}{2}}%
e^{-\frac{1}{2}\int_{\overline{u}_{A}}^{\overline{u}_{C}}\omega(\gamma
_{2}^{\prime}(s))ds}%
\end{split}
\]
where $\gamma_{1}$ and $\gamma_{2}$ are the curves representing the world
lines of each clock. Now, since from $A$ to $B$ and after $C$ both world lines
are the same, both curves being equal in these intervals, we could
reparametrize $\gamma_{2}$ by changing from $\overline{u}$ to $u$ to obtain
\[
\overline{\tau}^{\prime}(\overline{u}_{C})=\overline{\tau}^{\prime}%
(u_{A})\frac{du(\overline{u}_{C})}{d\overline{u}}\Big(\frac{g(\gamma
_{2}^{\prime}(u_{C}),\gamma_{2}^{\prime}(u_{C}))}{g(\gamma_{2}^{\prime}%
(u_{A}),\gamma_{2}^{\prime}(u_{A}))}\Big)^{\frac{1}{2}}e^{-\frac{1}{2}%
\int_{\overline{u}_{A}}^{\overline{u}_{C}}\omega(\gamma_{2}^{\prime}(s))ds}.
\]
Thus we have
\[
\overline{\tau}^{\prime}(u_{C})=\frac{d\overline{u}(u_{C})}{du}\overline{\tau
}^{\prime}(\overline{u}_{C})=\overline{\tau}^{\prime}(u_{A})\Big(\frac
{g(\gamma_{2}^{\prime}(u_{C}),\gamma_{2}^{\prime}(u_{C}))}{g(\gamma
_{2}^{\prime}(u_{A}),\gamma_{2}^{\prime}(u_{A}))}\Big)^{\frac{1}{2}}%
e^{-\frac{1}{2}\int_{\overline{u}_{A}}^{\overline{u}_{C}}\omega(\gamma
_{2}^{\prime}(s))ds}.
\]
Since $\gamma_{1}^{\prime}(u_{A})=\gamma_{2}^{\prime}(u_{A})$ and $\gamma
_{1}^{\prime}(u_{C})=\gamma_{2}^{\prime}(u_{C})=\gamma^{\prime}(u_{C})$
(recalling that it is the same curve with the same parametrization), then for
the reading of clock 2 we get
\[%
\begin{split}
\overline{\tau} &  =\overline{\tau}^{\prime}(u_{A})\Big(\frac{g(\gamma
_{1}^{\prime}(u_{C}),\gamma_{1}^{\prime}(u_{C}))}{g(\gamma_{1}^{\prime}%
(u_{A}),\gamma_{1}^{\prime}(u_{A}))}\Big)^{\frac{1}{2}}e^{-\frac{1}{2}%
\int_{\overline{u}_{A}}^{\overline{u}_{C}}\omega(\gamma_{2}^{\prime}%
(s))ds}\frac{\int_{u_{C}}^{u}e^{-\frac{1}{2}\int_{u_{c}}^{u^{\prime}}%
\omega(\gamma^{\prime}(s))ds}(-g(\gamma^{\prime}(u^{\prime}),\gamma^{\prime
}(u^{\prime})))^\frac{1}{2}du^{\prime}}{(-g(\gamma^{\prime}(u_{C}),\gamma^{\prime
}(u_{C})))^{\frac{1}{2}}}\\
&  =\frac{\overline{\tau}^{\prime}(u_{A})}{\tau^{\prime}(u_{A})}e^{\frac{1}%
{2}\int_{u_{A}}^{u_{C}}\omega(\gamma_{1}^{\prime}(s))ds-\frac{1}{2}%
\int_{\overline{u}_{A}}^{\overline{u}_{C}}\omega(\gamma_{2}^{\prime}%
(s))ds}\tau.
\end{split}
\]
Also, as both clocks have the same scale at the event $A$, that is,
\[
\frac{\overline{\tau}^{\prime}(u_{A})}{\tau^{\prime}(u_{A})}=\frac
{d\overline{\tau}(\tau_{A})}{d\tau}=1\;,
\]
we finally get
\begin{equation}
\overline{\tau}=e^{\frac{1}{2}\int_{\Gamma_{1}}\omega-\frac{1}{2}\int
_{\Gamma_{2}}\omega}\tau. \label{secondclockeffect}%
\end{equation}
Thus we are led to conclude that

\begin{center}
\textit{A Weyl space-time does not exhibit a second clock effect if, and only
if, $\int_{\Gamma_{1}}\omega=\int_{\Gamma_{2}}\omega,$ $where$ $\Gamma_{1}$
and $\Gamma_{2}$ are arbitrary time-like curves joining the same pair of
events $A$ and $B$.}
\end{center}

Taking into consideration the previous statement, we are now in position to
discuss whether or not a Weyl structure is a suitable model for space-time. In
this framework we have an unambiguous, well-defined and physically sensible
definition for proper time, and we have arrived at the conclusion that a
general Weyl space-time exhibits a second clock effect. We note that up to now
this kind of effect has never been measured, and we could also give
interesting arguments, which would, at least, set strong constraints on the
values of this effect (see, for example, \cite{Geroch}). Then, it is natural
to ask, what would be the most general mathematical structure of space-time if
we were to reject space-times exhibiting a second clock effect. If, in the
previous statement, we could drop the condition that the curves $\Gamma_{1}$
and $\Gamma_{2}$ are time-like, then it is immediate to see that the
non-existence of a second clock effect would imply that the 1-form $\omega$
must be be closed. If, in turn, we assume that space-time is simply connected,
then we would be led to a Weyl integrable space-time. In what follows, we will
show that even if the timelike character of the curves is kept, a Weyl
integrable space-time still emerges as a consequence of the non-existence of a
second clock effect. Thus, let us suppose that $\omega$ is path independent
when integrated over time-like curves and see what this implies. First
consider an arbitrary event in space-time, $p\in M$, and the set of events
$I_{p}$\ that are causally connected with $p$. Thus $I_{p}$ is defined as
\[
I_{p}\doteq I_{p}^{+}\bigcup I_{p}^{-}=\{q\in
M,\;\text{such\ that\ there\ is\ a\ time-like\ curve\ joining\ }%
q\text{\ and\ }p\text{\},}%
\]
which is an open subset of $M$. On this set define the following function:
\[
f(q)\doteq\int_{\gamma}\omega
\]
where $\gamma$ is any time-like curve joining $p$ and $q$ (see figure
\ref{definition}). Since, by hypothesis, the integral of $\omega$ over any such
time-like curve does not depend on the choice of the curve, then the previous
function is well-defined.

\begin{figure}[th]
\centering
\includegraphics[width=45mm]{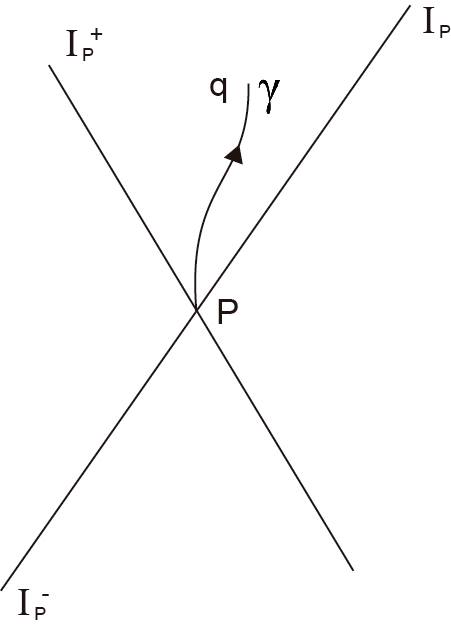}\caption{Domain of $f$. }%
\label{definition}%
\end{figure}

Before going any further, it is important to remark that $f$ is differentiable
near $p$. To see this, we explicitly compute $f$ in a normal coordinate system
around $p$. In such a coordinate system computing $f$ is not difficult since
we can choose as the time-like curve $\gamma$ joining $p$ and $q$, with $q$
sufficiently near to $p$, the unique time-like geodesic joining these points
inside the normal neighborhood. Then, since the coordinate expression of a
geodesic in a normal coordinate system is given by a straight line in the
direction of the initial velocity of the geodesic (see, for example,
\cite{Oneill}), we can make an easy explicit computation for $f$. Doing this,
we get the following:%

\begin{align}
\label{coordexpression}f(x)  &  =x^{\alpha}\int_{0}^{1}\omega_{\alpha}(tx)dt.
\end{align}

Assuming $\omega$ to be smooth, it is clear that $f$ is differentiable near
$p$. From now on we will restrict $f$ to such a neighborhood of $p$, so that
we know that $df$ exists. Consider now a neighborhood $U_{q}$ of $q$ such that
$U_{q}\subset I_{p}$ and a time-like vector $X_{q}\in T_{q}M$. Then, we can
construct a smooth curve $\mu$ defined in a neighborhood $J_{q}$ of the origin
satisfying
\[%
\begin{split}
\mu(0)  &  =q\\
\mu^{\prime}(0)  &  =X_{q}%
\end{split}
\]
Also, since $g_{q}(X(q),X(q))<0$, then, by continuity, there is a neighborhood
of $0\in\mathbb{R}$ where $\mu$ is time-like. To avoid complications in the
notation we will take $J_{q}$ to be such a neighborhood (if needed, we could
shrink the original $J_{q}$ in order for this to be true). Now we wish to compute%

\[
df_{q}(X_{q})=\frac{df\circ\mu(t)}{dt}|_{t=0}.
\]

We can compute $f\circ\mu(t)$ in the following way. First consider a piecewise
smooth time-like curve joining $p$ and $\mu(t)$ consisting of a time-like
curve $\beta$, joining $p$ and $q$, and then the curve $\mu$ joining $q$ and
$\mu(t)$. In this set-up we have the following:
\[%
\begin{split}
f\circ\mu(t)  &  =\int_{\beta}\omega+\int_{\mu}\omega\\
&  =f(q)+\int_{o}^{t}\omega(\mu^{\prime}(u))du.
\end{split}
\]
Therefore
\begin{align*}
\frac{df\circ\mu(t)}{dt}|_{t=0}  &  =\omega_{\mu(0)}(\mu^{\prime}(0))\\
&  =\omega_{q}(X_{q}).
\end{align*}
Thus we have shown that $df_{q}(X_{q})=\omega_{q}(X_{q})$, for any time-like
$X_{q}\in T_{q}M.$

To see how $df_{q}$ acts on an arbitrary vector of $T_{q}M$ consider an
orthonormal basis $\{e_{0},e_{i}\}$ of $T_{q}M$, where $e_{0}$ is time-like
and $e_{i}$ are space-like. Now if we consider the set of vectors in $T_{q}M$
given by
\[%
\begin{split}
\tilde{e}_{0}  &  \doteq e_{o}\\
\tilde{e}_{i}  &  \doteq2e_{0}+e_{i}%
\end{split}
\]
then $\{\tilde{e}_{o},\tilde{e}_{i}\}$ gives a basis of time-like vectors for
$T_{q}M$. Then if we pick $V\in T_{q}M$ arbitrary and we write it in this
basis $V=V^{\alpha}\tilde{e}_{\alpha}$, we can then compute the action of
$df_{q}$ on this element:
\[%
\begin{split}
df_{q}(V)  &  =V^{\alpha}df_{q}(\tilde{e}_{\alpha})\\
&  =V^{\alpha}\omega_{q}(\tilde{e}_{\alpha})\\
&  =\omega_{q}(V)\Rightarrow\\
df_{q}(V)  &  =\omega_{q}(V)\;\forall\;V\in\;T_{q}M\;and\;q\in\;I_{p}.
\end{split}
\]
Therefore we have shown that, given any point $q\in I_{p}$, there is a
neighborhood of $q$ where $\omega$ is exact. Then, if we accept that any event
$q$ in space-time lies in $I_{p}$ for some other event $p$, we conclude that
$\omega$ is closed. Finally, assuming that space-time is simply connected,
then, to avoid the second clock effect, our space-time model has to be reduced
to a Weyl integrable structure $(M,[(g,\phi)])$.

\section{Final Comments}

In this paper we have revisited the notion of proper time in the framework of
Weyl's space-time within the axiomatic approach put forward by Elhers, Pirani
and Schild. We have shown that the EPS original definition of proper time is
equivalent to the definition proposed by V. Perlick, even though the latter
seems to be more easily motivated in the context of EPS's axiomatic approach
to space-time. After showing that Perlick's notion leads to a well-defined and
physically sensible definition of proper time in a general Weyl space-time,
without invoking any additional axioms, we proved that this kind of space-time
exhibits a second clock effect. We then derived the condition a space-time
must obey in order that a second clock effect does not appear. We have shown
that in this case the geometric space-time structure should be that of a Weyl
integrable space-time. Our final conclusion is that, within the slightly
modified EPS axiomatic approach, taking into account Perlick's proper time,
Weyl integrable space-time appears naturally as the most general model for
space-time. However, the question of a possible existence of a second clock
effect, which would then widen this scenario leaving us with a general Weyl
space-time as a suitable mathematical model, is something that should be
settled by experiment.

\bigskip

\section*{Acknowledgements}

\noindent R. A and C. R. would like to thank CNPq and CLAF for financial support.

\end{document}